\begin{document}
\title{DXML: Distributed Extreme Multilabel Classification\thanks{Paper accepted in BDA 2021.}}
\titlerunning{DXML}
%
\author{Pawan Kumar\inst{1}\orcidID{0000-0001-5632-6964}}
\authorrunning{P. Kumar}
%
\institute{International Institute of Information Technology, Hyderabad, 500032, India \\
\email{pawan.kumar@iiit.ac.in}}
\maketitle              
\begin{abstract}
As a big data application, extreme multilabel classification has emerged as an important research topic with applications in ranking and recommendation of products and items. A scalable hybrid distributed and shared memory implementation of extreme classification for large scale ranking and recommendation is proposed. In particular, the implementation is a mix of message passing using MPI across nodes and using multithreading on the nodes using OpenMP. The expression for communication latency and communication volume is derived. Parallelism using work-span model is derived for shared memory architecture. This throws light on the expected scalability of similar extreme classification methods. Experiments show that the implementation is relatively faster to train and test on some large datasets. In some cases, model size is relatively small. \\
{\bf Code:} \url{https://github.com/misterpawan/DXML}

\keywords{extreme multilabel classification \and distributed memory \and multithreading.}
\end{abstract}

\section{Introduction}

Extreme multi-label learning is an active research problem in big data applications with several applications
in tagging, recommendation, and ranking. Consider a features matrix of $n$ examples $X \in \mathbb{R}^{n\times d}$ with $d$-dimensional features with their corresponding labels matrix $Y \in \mathbb{R}^{n\times \ell},$ the aim of multilabel classification problem is to assign some relevant labels out of a total of $\ell$-labels to new data sample. The extreme multilabel classification problem refers to the setting when $n,$ $d,$ and $\ell$ quickly scale to large numbers often upto several millions. Moreover, number of data samples also are in several millions. This is one of the classic challenge problems of big data analytics. 

There are several challenges in designing algorithms for extreme multilabel classification. It is found that the average number of labels per data points is usually very small, moreover, it is know that label frequencies follow the so-called Zipf's law, which means that there are only small number of labels which are found in large number of samples, such labels are called head labels. On the other hand, there are large number of lables that occur less frequently, and such labels are called tail labels. Such a distribution creates a bias in the classifier, because since head labels occur more frequently, the classifier may learn more robustly about predicting head labels compared to tail labels, thereby, leading to a classifier that is biased towards predicting head labels better. Despite all these challenges, one of the major concerns is designing scalable algorithms for modern day hardwares.      

Looking at the increasing trend of number of labels going upto millions, a hybrid parallel design and implementation of extreme classification algorithms is essential that can exploit shared as well as distributed memory architectures \cite{kumar2014,kumar2015,kumar2013,kumar2013b,kumar2014b}. In this paper, we show a parallel design and implementation for a hybrid distributed memory and shared memory implementation using MPI and OpenMP. To the best of our knowledge, this is the first time a hybrid parallel implementation has been shown for extreme classification. We derive the communication bounds for distributed memory, and parallelism bound for shared memory implementation. Our preliminary numerical experiments suggest that we have fastest training and test times when compared to some of the existing parallel implementations in C/C++. 

The rest of the paper is organized as folows. In section 2, we discuss previous related work. In section 3, we discuss distributed memory implementation. We derive expressions for communication volume and latency. Finally, in section 4, we discuss numerical experiments on multilabel datasets. 

\section{Previous Work}
Some papers for multilabel classification were proposed during 2006-2014 \cite{Tsoumakas2007,Zhang2014}. Some of these papers explored $k$ nearest neighbours \cite{Zhang2007}. The idea of using random forest were also proposed \cite{Kocev2007}. With the recent demand for scaling the algorithms to millions of labels, new class of methods have been proposed, where scalability is achieved by either parallelism \cite{Yen2016,Yen2017},  dimension reduction \cite{Bhatia2015,Tagami2017,Weston2011,Yu2014}, or by hierarchical embedding \cite{Jain2016,Jasinska2016,Prabhu2014,Weston2013,naram2021a}.   

\section{Distributed Memory Implementation}
We show a hybrid parallel (MPI+OpenMP) implementation of CRAFTML \cite{craftml}, and call it DXML. During training, DXML computes a forest $F$ of $m_F$ $k$-ary instance trees, which are constructed by recursive partitioning. The training algorithm is shown in Algorithm \ref{alg:train}. In line 1, the input is a feature matrix $X$ and a label matrix $Y.$ We wish to build a label-tree with nodes denoted by $v.$ We then apply the termination condition. The termination condition of the recursive partitioning are the following
\begin{enumerate}
    \item cardinality of the node's instance subset is less than a given threshold $n_{\text{leaf}}$
    \item all the given instances have the same features
    \item all the given instances have the same labels
\end{enumerate}

If the stop condition is false, and the current node $v$ is not a leaf as in line 4, then a multi-class classifier is built using Algorithm \ref{alg:classifier}. 

The sequential node training stage in DXML will be decomposed into following three consecutive steps:
\begin{itemize}
\item a random projection into lower dimensional spaces of the features and label vectors corresponding to the node's instances. That is, in Algorithm \ref{alg:classifier}, in line 2, we first sample $X_s$ and $Y_s$ from $X_v$ and $Y_v$ with a sample size $n_s.$ Then in lines 3 and 4, we do the random projection using projection feature matrix $P_x,$ and a random label projection matrix $P_y.$  
    \item from projected labels, partitioning of of the corresponding instances into $k$ temporary subsets using $k-$means. 
    \item A multiclass classifier is trained to assign each instance to the relevant temporary subset (that is, the cluster index found at step 2 above) from the corresponding feature vector. The classifier then partitiones the instances into $k$ final subsets or child nodes. This is achieved in by first calling a splitting (line 6, Algorithm \ref{alg:train}) of the instances into child nodes using output of k-means of Algorithm \ref{alg:classifier}. Then trainTree function is called recursively on the child nodes.  
\end{itemize}

Similar to FastXML, the nodes partitioning objective of DXML is to regroup instances with common labels in a same subset, but the computation is different. Finally, once a tree has been trained, its leaves store the average label vector of their associated instances. The partition strategy is driven by two constraints: partition computation must be based on randomly projected instances to ensure diversity, and must perform low complexity operations for scalability. The training algorithm is given below.
\begin{algorithm}
\caption{\label{alg:train}trainTree}
\begin{algorithmic}[1]
    \State \textbf{Input:} Training set with a feature matrix $X$ and a label matrix $Y.$
    \State Initialize node $v.$
    \State $v.$isLeaf $\leftarrow$ testStopCondition($X,Y$)
    \State if $v.$isLeaf = false then
    \State \quad $v.$classif $\leftarrow$ trainNodeClassifier($X,Y$)
    \State \quad $(X_{child_i},Y_{child_i})_{i=0,\cdots,k-1}$ $\leftarrow$ split($v.$classif,$X,Y$)
    \State \quad for $i$ from $0$ to $k-1$ do
    \State \quad \quad $v.child_i \leftarrow$ trainTree($X_{child_i},Y_{child_i}$)
    \State \quad end for
    \State else
    \State \quad $v.\hat{y} \leftarrow $ computeMeanLabelVector($Y$)
    \State end if
    \State \textbf{Output:} node $v$
\end{algorithmic}
\end{algorithm}

\begin{algorithm}
\caption{\label{alg:classifier}trainNodeClassifier}
\begin{algorithmic}[1]
    \State \textbf{Input:} feature matrix $(X_v)$ and label matrix $(Y_v)$ of the instance set of the node $v.$
    \State $X_s,Y_s \leftarrow$ sampleRows($X_v,Y_v,n_s$)   \Comment{$n_s$ is the sample size}
    \State $X_{s}^{'} \leftarrow X_sP_x$   \Comment{random feature projection}
    \State $Y_{s}^{'} \leftarrow Y_sP_y$  \Comment{random label projection}
    \State $c \leftarrow k$-means($Y_{s}^{'},k$)  \Comment{$c \in \{0,\cdots,k-1\}^{\min(n_v,n_s)}$} \\
    \Comment{$c$ is a vector where the $j^{th}$ component $c_j$ denotes the cluster idx of the $j^{th}$ instance associated to $(X_{s}^{'})_{j,.}$ and $(Y_{s}^{'})_{j,.}.$}
    \State for $i$ from $0$ to $k-1$ do
    \State \quad $(classif)_{i,.} \leftarrow $computeCentroid(${(X_{s}^{'})_{j,.}| c_j = i}$)
    \State end for
    \State \textbf{Output:} Classifier $classif(\in \mathbb{R}^{k \times d_{x}^{'}})$
\end{algorithmic}
\end{algorithm}

\subsection{Some More Detail on Training}
\textbf{Step 1: Random projections of the instances of $v$:} The feature and the label vectors $x$ and $y$ of each instance in $v$ are projected into a space with a lower dimensionality: $x^{'} = XP_x$ and $y^{'} = yP_y$ where $P_x$ and $P_y$ are random projection matrices of $\mathbb{R}^{d_x \times d_{x}^{'}}$ and $\mathbb{R}^{d_y \times d_{y}^{'}}$ respectively, and $d_{x}^{'}$ and $d_{y}^{'}$ are the dimensions of the reduced feature and label spaces resp. The projection matrices are kept different from one tree to another. The random projection considered is a sparse orthogonal projection matrix \cite{hashing_weinberger} with values of $-1$ or $+1$ on each row. The sparsity of the projections lead to faster computations, hence, faster projections. To retain the sparsity we use the hashing; we describe it next. 

In this so-called hashing trick algorithm, high dimensional dataset is projected into a lower dimension. Projection is done row after row in this algorithm. In a row of original matrix each element index is considered as a key and each element index of corresponding row in the projected matrix is considered as the bucket. Each index of original row (key) is mapped to index of lower dimension projected row (bucket) using the hash function. Similarly each key is also mapped to one of the two signs ($ + $ or $ - $) using another hash function. Multiple keys may be mapped to the same bucket, in that case elements in the same bucket are multiplied by their respective signs (obtained from second hash function) and added. Below is the algorithm of Hashing Trick.

\begin{algorithm}[H]
\caption{\label{alg:hash}Hashing Trick}
\begin{algorithmic}[1]
    \State \textbf{Input:} $X_{s}$, projectedSpaceDimension, $S_{x}$, $SS_{x}$
    \State $X_{s}' $ $\leftarrow$ 0 \Comment{initialize projected data matrix}
    \State $p$ $\leftarrow$ projectedSpaceDimension
    \State $nR$ $\leftarrow$ $X_{s}$.numberOfRows()
    \State $nC$ $\leftarrow$ $X_{s}$.numberOfCols()
    \State for $i$ $\leftarrow$ $0$ to $nR:$
    \State \quad for key $\leftarrow$ 0 to $nC:$
    \State \quad \quad Index $\leftarrow$ $hash1(key,S_{x})\%p$
    \State \quad \quad Sign $\leftarrow$ $2 \times (hash2(key,SS_{x})\%2) - 1$
    \State \quad \quad $X_{s}'$[i,Index] $\leftarrow$ $X_{s}'$[i,Index] + (Sign* $X_{s}$[i,key])
    \State \textbf{Output:} $X_{s}'$ 
\end{algorithmic}
\end{algorithm}

This Algorithm \ref{alg:hash} takes $X_{s}$ ( or $ Y_{s} $ ) and the projectedSpaceDimension, and  $S_{x},SS_{x}$ (seeds for hash1, hash2 functions respectively), and returns projected samples $X_{s}'$ (or $ Y_{s}' $). In line 4, $nR$ is the number of rows in $ X_{s},$ and $nC$ is the number of columns in $ X_{s}.$ The lines 6 and 7 are loops with $i$ iterating over the row indices of $X_{s}$, and \textbf{key} iterating over the column indices in the row (with index $i$).
In line 8, the Column Index (key) is hashed using hash1 function and if its more than the projectedSpaceDimension, then remainder when divided by p is considered to map it in the range$(0,p).$ In line 9, $2 \times hash2(key)\%2 - 1$ maps key to $+1$ or $-1$, and finally in
line 10, the element $X_{s}[i,key]$ is multiplied by the sign, and then added to already present element at $X_{s}'[i,Index].$

We consider the case where feature and label projections are the same in each node of $T.$
We may have tried different projections per node, but we don't consider that in this paper. 

\textbf{Step 2: Partitioning of Instance into $k$ Temporary Subsets :} Let $Y_s$ be the label matrix of a sample drawn without replacement. Let the sample size be at most $n_s.$ This sample is partitioned with a spherical $k$-means applied on $Y_sP_y.$ The use of spherical $k$-means is motivated by the facts that it is well-adapted to sparse data, moreover, the cosine metric is fast to compute. 

The cluster centroids are initialized using the $k$-means++ strategy to ameliorate the cluster stability, and to improve the algorithm performance against a random initialization.

The $k$means++ initialization strategy is as follows:
\begin{itemize}
    \item[1. ] Among the label centers, choose one center uniformly at random. 
    \item[1.] Choose one center uniformly at random from among the label vectors.
    \item[2.] For each label vector $y$, compute the distance $D(y),$ defined to be the distance between $y$ and the closest center that has already been chosen above.
    \item[3.]Choose one new data point at random as a new center, using a weighted probability distribution where a point $y$ is chosen with probability proportional to $D(y)^2$.
    \item[4.] Repeat steps 2 and 3 above until all centers have been chosen.
    \item[5.] After all the centers have been chosen, proceed with the spherical $k$-means clustering.
\end{itemize}

\textbf{Step 3: Assigning a subset from the projected features:} In each temporary subset the centroid of the projected feature vectors is computed. During the prediction phase, 
if the centroid of the subset is closest to the projected feature vector, then the classifier assigns this subset. For computing the closeness, the cosine measure is used.

\textbf{Prediction:} In the prediction phase, for each tree, the input sample goes from root to leaf, which is determined by the successive decisions of the classifier. The prediction is the average label vector stored in the leaf reached. The forest then aggregates the tree predictions with the average operator.

\paragraph{Algorithm Analysis}
Let $s_x(s_y)$ denote the average number of non-zero elements in the feature (label) vectors of the instances. Due to the hashing trick, the projected feature and label vectors have less than $s_x$ and $s_y$ non-zero elements in average. For a node $v$ of a tree $T,$ let $n_v$ denote the number of instances of the subset associated to $v.$ Let $i$ be the number of iterations of the spherical $k$-means algorithm.

\begin{lemma}
For a node $v$ of a tree $T,$ the time complexity $C_v$ is bounded by $O(n_v \times C)$ where $$C = k \times (i \times s_y + s_x)$$ is the complexity per instance.
\end{lemma}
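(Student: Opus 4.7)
The plan is to decompose the work done at node $v$ into the three consecutive phases described by Algorithm \ref{alg:classifier} together with the subsequent split in Algorithm \ref{alg:train}, bound each phase separately, and sum. Concretely, I will track (i) the cost of the random feature/label projection via the hashing trick, (ii) the cost of running spherical $k$-means on the projected label matrix, and (iii) the cost of computing the $k$ classifier centroids in projected feature space and using this classifier to route the $n_v$ instances into the $k$ child subsets. Throughout, I will use the two ingredients already set up in the preamble: the sample size satisfies $\min(n_v,n_s)\le n_v$, and the hashing trick preserves sparsity, so each projected feature (label) row has at most $s_x$ (resp.\ $s_y$) nonzeros on average.

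First, I would handle the projection step. Algorithm \ref{alg:hash} touches each nonzero entry of $X_s$ and $Y_s$ once, so projecting the sample costs $O(\min(n_v,n_s)\,(s_x+s_y))\le O(n_v(s_x+s_y))$. For the final splitting step of Algorithm \ref{alg:train}, all $n_v$ instances of $v$ must also be projected and assigned; the projection contributes $O(n_v s_x)$. Next, for spherical $k$-means on $Y_sP_y$ with $i$ iterations and $k$ centroids, each iteration computes the cosine similarity between every one of the $\min(n_v,n_s)$ sample rows and every one of the $k$ centroids; since each projected label row has $O(s_y)$ nonzeros, a single cosine is $O(s_y)$, giving $O(i\cdot n_v\cdot k\cdot s_y)$ for this phase. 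Finally, computing the $k$ centroids of the projected feature clusters costs at most $O(n_v s_x)$, and classifying each of the $n_v$ instances by picking the closest of $k$ centroids in $\mathbb{R}^{d_x'}$ costs $O(n_v\cdot k\cdot s_x)$.

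Adding the three bounds gives
\[
C_v = O\!\bigl(n_v s_x\bigr) + O\!\bigl(i\cdot n_v\cdot k\cdot s_y\bigr) + O\!\bigl(n_v\cdot k\cdot s_x\bigr) = O\!\bigl(n_v\cdot k\cdot(i s_y + s_x)\bigr),
\]
which is exactly $O(n_v \times C)$ as claimed, since the $k$-dominated terms absorb the stray $O(n_v s_x)$ projection cost.

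The main obstacle I expect is bookkeeping rather than mathematical depth: one must argue carefully that (a) replacing $\min(n_v,n_s)$ by $n_v$ is a legitimate upper bound, and (b) the hashing trick actually yields projected rows with at most $O(s_x)$ and $O(s_y)$ nonzeros on average, so that inner-product/cosine evaluations against centroids are $O(s_x)$ or $O(s_y)$ and not $O(d_x')$ or $O(d_y')$. Once these two sparsity/sample-size observations are made explicit, the dominant contribution is the $k$-means assignment cost $O(i\cdot n_v\cdot k\cdot s_y)$, with the classifier-evaluation cost $O(n_v\cdot k\cdot s_x)$ accounting for the remaining $k s_x$ term in $C$.
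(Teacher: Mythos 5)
Your proposal is correct, and it is worth noting that the paper itself gives no argument for this lemma at all---its ``proof'' is simply a citation to the CRAFTML paper. Your phase-by-phase accounting (hashing-trick projection, spherical $k$-means on the projected labels, centroid computation and routing in the projected feature space) is exactly the standard decomposition behind the stated bound, and your arithmetic is right: the projection costs $O(n_v(s_x+s_y))$ and are absorbed, the clustering contributes the $O(i\,n_v\,k\,s_y)$ term, and the classifier evaluation contributes the $O(n_v\,k\,s_x)$ term, summing to $O(n_v\cdot k\,(i\,s_y+s_x))=O(n_v\times C)$. The one subtlety you gloss over slightly is that the centroids (both the $k$-means label centroids and the classifier feature centroids) are averages of sparse rows and hence may be dense; the per-comparison cost is still $O(s_y)$ or $O(s_x)$ only because one iterates over the nonzeros of the \emph{instance} row and precomputes the centroid norms, which is the standard trick and is implicit in your ``a single cosine is $O(s_y)$'' claim. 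You also omit the $k$-means++ initialization cost, but that is $O(k\,n_v\,s_y)$ and is dominated by the iteration cost for $i\ge 1$, so the bound stands.
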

\begin{proof}
See \cite{craftml}.
\end{proof}

Let $T$ be a strictly $k$-ary tree and $\ell_T$ be its number of leaves. Let $m_T = \dfrac{\ell_T -1}{k-1}$ be its number of nodes and $\Bar{n}_T = \dfrac{\sum_{v \in T}n_v}{m_T}$ be the average number of instances in its nodes.

\begin{proposition}
\label{trainComplexity}
If the tree $T$ is balanced, its training time complexity $C_T$ is bounded by $$O\left(\log_{k}\big(\frac{n}{n_{leaf}}\big)\times n \times C\right).$$ 
Otherwise, $C_T$ is equal to $$O \left(\frac{\ell_T -1}{k -1} \times \Bar{n}_T \times C \right).$$ 
\end{proposition}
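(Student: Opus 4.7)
The plan is to observe that the total training cost is just the sum of the per-node costs given by the previous lemma, and then to evaluate that sum in two different ways depending on the tree's shape.

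First I would start from the lemma, which gives $C_v = O(n_v \times C)$ for every node $v$ of $T$. Since \textbf{trainTree} recurses once on each child and does $O(n_v \times C)$ work at $v$ itself, the total training time is
\[
C_T \;=\; \sum_{v \in T} C_v \;=\; O\!\left(C \sum_{v \in T} n_v\right).
\]
This identity is the workhorse; everything else is just rewriting $\sum_{v \in T} n_v$ in terms of the quantities that appear in the two bounds.

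For the general (possibly unbalanced) case, I would use the definitions given just before the proposition: $m_T = (\ell_T - 1)/(k-1)$ counts the nodes and $\bar{n}_T = (\sum_{v \in T} n_v)/m_T$ is the average population, so $\sum_{v \in T} n_v = m_T \bar{n}_T = \frac{\ell_T - 1}{k-1} \bar{n}_T$. Substituting into the displayed sum immediately yields the second bound $O\bigl(\tfrac{\ell_T - 1}{k-1} \times \bar{n}_T \times C\bigr)$.

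For the balanced case, I would argue level by level. At every internal node, \textbf{split} partitions the instance set into $k$ disjoint children, so on any given level of $T$ the sets $\{X_v\}$ form a partition of the root's instance set, giving $\sum_{v \text{ at level } j} n_v \le n$. A balanced $k$-ary tree whose leaves have size at most $n_{\text{leaf}}$ has depth at most $\lceil \log_k(n/n_{\text{leaf}}) \rceil$, so summing over levels gives $\sum_{v \in T} n_v = O(n \log_k(n/n_{\text{leaf}}))$, and combining with the displayed identity yields the first bound. The only subtle point, and the place I would be most careful, is justifying that the per-level instance count is bounded by $n$: this relies on \textbf{split} producing disjoint subsets (which is implicit in the $k$-means-based partition described in Algorithm~\ref{alg:classifier}) rather than overlapping ones, and on the balanced assumption fixing the depth — once those are nailed down the rest is direct arithmetic.
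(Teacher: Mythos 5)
Your argument is correct, and it is essentially the standard proof: the paper itself gives no argument for this proposition (its ``proof'' is only a citation to the CRAFTML paper), and what you have written is precisely the reasoning that reference contains --- sum the per-node bound $C_v = O(n_v\times C)$ over all nodes, read off the general case from the definitions of $m_T$ and $\bar{n}_T$, and in the balanced case bound $\sum_{v}n_v$ level by level using disjointness of the children produced by \textbf{split} together with the depth bound $O\bigl(\log_k(n/n_{\text{leaf}})\bigr)$. Your flagged subtlety (disjointness of the per-level instance sets) is exactly the right thing to check, and it holds here since the classifier assigns each instance to a single cluster.
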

\begin{proof}
See \cite{craftml}.
\end{proof}

It can be observed that the time complexities are independent of the projection dimensions $d_{x}^{'}$ and $d_{y}^{'}.$ The clustering is done after sampling from the instances, thus the training and clustering complexity are further reduced.

\begin{proposition}
The memory complexity of a tree $T$ is bounded by $$O \big(n \times s_y + m_T \times k \times d_{x}^{'} \big).$$
\end{proposition}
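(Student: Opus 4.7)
The plan is to decompose the memory footprint of $T$ into two disjoint contributions—storage at the leaves (for the average label vectors) and storage at the internal nodes (for the classifiers produced by Algorithm~\ref{alg:classifier})—and then bound each piece separately, summing for the final result.

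First I would handle the leaves. Every instance is routed (during training) to exactly one leaf, so the leaves induce a partition of the $n$ training instances. At each leaf $v$, the stored object is the mean label vector $v.\hat{y}$ computed from the $n_v$ instances that fell there. The support of this mean vector is contained in the union of the supports of the instance label vectors, so the number of nonzeros stored at $v$ is at most $\sum_{j \in v} |\mathrm{supp}(y_j)|$. Summing over all leaves gives $\sum_v \sum_{j\in v} |\mathrm{supp}(y_j)| = \sum_{j=1}^{n} |\mathrm{supp}(y_j)|$, which by definition of $s_y$ as the average number of nonzero label entries per instance equals $n \cdot s_y$. This yields the $O(n \times s_y)$ term.

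Next I would handle the internal nodes. By Algorithm~\ref{alg:classifier}, each non-leaf node stores the matrix $classif \in \mathbb{R}^{k \times d_{x}^{'}}$ of centroids of the projected feature vectors, which occupies $O(k \times d_{x}^{'})$ memory regardless of $n_v$, since the projection dimension is fixed a priori. Since $T$ has $m_T$ nodes in total, the aggregate internal-node storage is at most $m_T \cdot k \cdot d_{x}^{'}$, giving the $O(m_T \times k \times d_{x}^{'})$ term. Adding the two contributions delivers the claimed bound.

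The only mildly subtle step is the leaf bound: one has to argue that even though the mean label vector $v.\hat{y}$ is dense in the algebraic sense (a weighted average over $n_v$ sparse vectors), its stored representation only needs to record positions in the union of the supports, so the total cost telescopes across the leaf partition into the total label sparsity $n \times s_y$. Everything else is a direct reading of the data structures described in Algorithms~\ref{alg:train} and \ref{alg:classifier}, and no additional combinatorial argument about the shape of $T$ (balanced or not) is required—the bound is expressed directly in terms of $m_T$.
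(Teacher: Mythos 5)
Your proof is correct and is essentially the argument the paper implicitly relies on (the paper itself only defers to the CRAFTML reference, whose proof is exactly this decomposition into leaf storage of mean label vectors, bounded via the partition of the $n$ instances by total label sparsity $n \times s_y$, plus internal-node storage of the $k \times d_{x}^{'}$ centroid matrices). The only cosmetic remark is that $m_T = (\ell_T - 1)/(k-1)$ counts the internal nodes of a strictly $k$-ary tree, which is precisely where the classifiers live, so your second term is exact rather than merely an upper bound.
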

\begin{proof}
See \cite{craftml}.
\end{proof}

\subsection{Hybrid MPI and OpenMP Parallel Implementation}
We use message passing interface MPI \cite{OpenMPI} to train for each learner. Each learner or process reads the data, and calls trainTree in Algorithm \ref{alg:train}. The classification for each node (child) at a given level is processed by multiple threads. Each learner stores their own mean label vector computed in line 11 of Algorithm \ref{alg:train}. The model parameters for each tree is sent to master node for faster prediction. Let $n_t$ be the model size for each tree, and there are $m_F$ trees, then the communication volume from the worker nodes to the master node is $O(m_F n_t).$ The latency cost is the number of messages passed. In this case, each processor communicates once to master node. Hence the latency cost is $O(m_F).$ Let $P$ denote the number of processors, then since there are as many processors as number of trees in the forest, $P = m_F.$ We have the following results. 
\begin{proposition}
Let there be $P$ processors and $m_F = P,$ we have the following communication costs
\begin{align*}
\text{communication\_volume}~&=O(P n_t) \\
\text{latency}~&= O(P)
\end{align*}
\end{proposition}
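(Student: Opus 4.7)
The plan is to read the communication costs directly off the parallel scheme described just above the proposition, since the bookkeeping is essentially a sum over the $P$ worker processes and a count of how often each one transmits to the master.

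First I would fix notation: there are $P$ processes, and since $m_F = P$ each process is responsible for training exactly one tree of the forest; after training, each process ships its tree's parameters (of size $n_t$) to the master for use in prediction. With this picture in place, the two quantities to bound are the total number of words sent across the network (communication volume) and the total number of point-to-point messages (latency).

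For the communication volume I would simply sum the per-process contributions. Each of the $P$ workers sends one tree model of size $n_t$ to the master, no other inter-node data exchange is performed in the training phase, so the total volume is $\sum_{p=1}^{P} n_t = P\,n_t = O(P n_t)$. For latency I would count messages in the same way: each worker issues exactly one send to the master, giving $P$ messages in all, hence $O(P)$. Substituting $m_F = P$ where the earlier paragraph states $O(m_F n_t)$ and $O(m_F)$ makes the two bounds identical to what was already informally derived.

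There is no real obstacle here; the statement is a restatement of the implementation specification in asymptotic form. The only subtlety worth flagging is that the bound ignores any master-side aggregation constants and assumes that $n_t$ is an upper bound on the per-tree model size (so that forests with heterogeneous tree sizes are still covered by $O(Pn_t)$); this is safe because the training algorithm caps each tree by the same leaf-threshold and projection dimensions. With that remark the proof reduces to the two-line accounting above.
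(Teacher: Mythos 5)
Your accounting matches the paper's own (implicit) derivation exactly: the paragraph preceding the proposition obtains $O(m_F n_t)$ volume and $O(m_F)$ latency by noting each of the $m_F$ workers sends its tree model of size $n_t$ to the master once, then substitutes $P = m_F$. Your proposal is correct and takes essentially the same route.
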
 
We also exploit shared memory parallelism using OpenMP \cite{OpenMP} when training each tree. This follows a spawn-sync model. At root, a master thread launches $k$ child processes, and each of the $k$ child process calls trainNodeClassifier. We use work-span model \cite{cilk} to do parallel complexity 
analysis of trainTree. The span denoted by 
$T_{\infty},$ which corresponds to critical path, i.e., the longest path from root node to a leaf node. Now we calculate the work denoted by $T_1,$ which is the total work done to train the classifier for all the nodes. We define the parallelism to be $T_1/T_{\infty}.$ We have the following proposition. 
\begin{proposition}
Let $T$ be a strictly $k-$ary tree, $\ell_T$ be its number of leaves. Let $m_T, \bar{n}_T,$ and $C$ be defined as above, then the parallelism for a tree is bounded by 
\begin{align*}
T_1 / T_{\infty} = O \left( \dfrac{1}{\log_k m_T} \log_k \left( \dfrac{n}{n_{leaf}} \right) \times n \times C \right). 
\end{align*}
\end{proposition}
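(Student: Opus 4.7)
The plan is to apply the work--span model directly, assembling both quantities from results already in the paper. For the total work $T_1$, I would observe that by definition $T_1$ aggregates the computational cost across every node trained in the tree, which is exactly the sequential training complexity. Proposition~\ref{trainComplexity} therefore supplies
$$T_1 = O\!\left(\log_k\!\left(\tfrac{n}{n_{leaf}}\right) \times n \times C\right)$$
in the balanced case. No new calculation is needed for this half of the ratio; it is purely a re-labelling of the earlier complexity bound in the work--span vocabulary.

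For the span $T_\infty$, I would argue directly from the spawn--sync structure described in Section 3.2. At each internal node, the master thread spawns one child task per branch via OpenMP, and each child task recursively calls \texttt{trainTree}; a sync occurs before propagating back up. The resulting computation DAG is therefore structurally isomorphic to $T$ itself, and its critical path is the longest root-to-leaf chain of dependent calls. For a strictly $k$-ary tree with $m_T$ nodes this depth is $O(\log_k m_T)$, so $T_\infty = O(\log_k m_T)$.

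Taking the ratio then yields
$$\frac{T_1}{T_\infty} \;=\; O\!\left(\frac{1}{\log_k m_T}\,\log_k\!\left(\tfrac{n}{n_{leaf}}\right)\times n \times C\right),$$
which is the asserted bound.

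The main delicacy is the definition of span that makes this calculation come out as stated. A strictly work-weighted critical path would charge $O(n_v \cdot C)$ at each node on the deepest branch; for a balanced tree this telescopes as a geometric series to $O(n\cdot C)$ and would give a different (smaller) parallelism. The statement is only consistent with counting the \emph{hop length} of the longest root-to-leaf path for $T_\infty$, which is precisely what the preamble to the proposition states. I would therefore make this convention explicit at the start of the proof so that per-node cost is attributed to $T_1$ alone, and then the two bullets above combine without further computation.
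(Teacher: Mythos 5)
Your proposal follows essentially the same route as the paper: take $T_1$ directly from Proposition~\ref{trainComplexity} in the balanced case, set $T_\infty = O(\log_k m_T)$ as the depth of the $k$-ary spawn--sync recursion, and divide. Your closing remark about the hop-count versus work-weighted definition of span is a fair observation of a convention the paper leaves implicit, but it does not change the argument, which matches the paper's proof.
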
  
\begin{proof}
From \ref{trainComplexity}, the total work done which is denoted by $C$ is given by the following bound 
\begin{align*}
T_1 = C = O \left( \log_k \left( \dfrac{n}{n_{leaf}} \right)\times n \times C \right). 
\end{align*}
We also have $T_{\infty} = \log_k m_T.$ This gives the required parallelism.
\end{proof}

\section{Numerical Experiments}
We did our MPI+OpenMP experiments on Intel Xeon architecture with 10 nodes with 120GB RAM. We used 10 MPI processes and 5 threads per processes. We choose $m_F=50$ and $n_{leaf}=10.$ The feature and label projection dimensions are $d_x' = \min(d_y, 10000)$ and $d_y' = \min(d_y, 10000).$
We show the precision scores in Table \ref{tab:precision}. The best precision scores among the tree-based classifiers are indicated in bold. For example, the DXML has highest P@1 precision scores for Mediamill, EURLex-4K, Delicious-200K among the tree based classifiers. In other cases, it is close to the precisions of other classifiers. In Table \ref{tab:time}, we show the train time, test time, and model size. The train times for DXML was best among all methods for all the datasets.  The model size for DXML is same as the one for CRAFTML. The model size for DXML/CRAFTML was best for Amazon-670 and Delicious-200K. The model size for PPDSp is very large for Amazon-670. For DXML, the learned model parameters remain distributed, hence, there is an additional cost of reduction operation. 

\begin{table*}[h]
    \centering
   
    \begin{tabular}{cc|c|ccc|c}
       \hline
      \multicolumn{2}{c|}{Language} &  \multicolumn{4}{c|}{C/C++} & \multicolumn{1}{c}{C++} \\ \hline 
     \multicolumn{2}{c|}{Machine}   & \multicolumn{1}{c|}{ 50 cores} & \multicolumn{3}{c|}{1 core} &  \multicolumn{1}{c}{100 cores}  \\  
     \hline 
    Algorithm &                                & DXML      & FastXML   & PFastReXML    & SLEEC       & DISMEC     \\
    \hline 
      EURLex-4K         & Train                 & {\bf 38.01}     & 315.9         & 324.4     & 4543.4     & 76.07      \\
                        & Test (ms)            & {\bf 1.29}      & 3.65          & 5.43      & 3.67        & 2.26       \\
                        & Model (MB)         & 30        & 384           & NA        & 121           & {\bf 15}         \\ \hline 

      Delicious-200K    & Train                 &  {\bf 2929.0}         & 8832.46       & 8807.51   & 4838.7    & 38814      \\
                        & Test (ms)              & 10.40          & 1.28          & 7.4       & 2.685      & 311.4    \\
                        & Model (GB)       &    {\bf 0.346}       & 1.3           & 20        & 2.1       & 18        \\ \hline 
      Amazon-670K       & Train                 & {\bf 752.65}    & 5624          & 6559      & 20904           & 174135     \\
                        & Test (ms)            & 3.65      & {\bf 1.41}          & 1.98      & 6.94          & 148      \\
                        & Model (GB)       & {\bf 0.494}      & 4.0           & 6.3       & 6.6            & 8.1       \\ \hline 
      AmazonCat-13K     & Train                & {\bf 1164.13}   & 11535         & 13985     & 119840      & 11828     \\
                        & Test (ms)           & 8.98      & 1.21          & 1.34      & 13.36       & 0.2        \\
                        & Model (GB)      & {\bf 0.659}     &9.7            & 11        &12      & 2.1       \\
      \hline 
    \end{tabular}
    \caption{Train Time, Test Time, and Model Size. Here NA means not available.}
    \label{tab:time}
    
\end{table*}

\begin{table*}[h]
    \centering 
    \begin{tabular}{cc|cccc|cccc}
   \hline
      \multicolumn{2}{c|}{Method Type} & \multicolumn{4}{c|}{Tree based} & \multicolumn{2}{c}{Other} \\ \hline 
      Algorithm     &   Scores   & DXML    & PFastReXML& FastXML & LPSR     & SLEEC         & DISMEC\\
    \hline 
       Mediamill    &  P@1 & {\bf 87.20}   & 83.98   & 84.22     & 83.57    & 87.82         & 84.83 \\
                    &  P@3 & {\bf 71.52}     & 67.37   & 67.33     & 65.78    & 73.45         & 67.17 \\
                    &  P@5 &   {\bf 57.56}    & 53.02   & 53.04     & 49.97    & 59.17        & 52.80 \\ \hline 

       Delicious    &  P@1 &  67.28   & 67.13   & {\bf 69.61}     & 65.01    & 67.59            & NA \\
                    &  P@3 &61.64     & 62.33   & 64.12     & 58.96    & 61.38             & NA \\
                    &  P@5 & 57.17    & 58.62   & 59.27     & 53.49    & 56.56             & NA \\ \hline 
       EURLex-4K    &  P@1 &  {\bf 78.20}   & 75.45   & 71.36     & 76.37    & 79.26         & 82.40 \\
                    &  P@3 &  64.2     & 62.70   & 59.90     & 63.36    & 64.30         & 68.50 \\
                    &  P@5 &  53.26    & 52.51   & 50.39     & 52.03    & 52.33         & 57.70 \\ \hline 
       Wiki-10      &  P@1 & {\bf 84.68}    & 83.57   & 83.03     & 72.72    & 85.88           & 85.20 \\
                    &  P@3 &  72.54    & 68.61   & 67.47     & 58.51    & 72.98             & 74.60 \\
                    &  P@5 &  62.47    & 59.10   & 57.76     & 49.50    & 62.70             & 65.90 \\ \hline 
 
      Delicious-200K&  P@1 &   {\bf 47.8}6  & 41.72   & 43.07     & 18.59    & 47.85        & 45.50 \\
                    &  P@3 &  41.25     & 37.83   & 38.66     & 15.43    & 42.21         & 38.70 \\
                    &  P@5 &  38.00    & 35.58   & 36.19     & 14.07    & 39.43         & 35.50 \\ \hline 
      Amazon-670K   &  P@1 &  37.31   & {\bf 39.46}   & 36.99     & 28.65    & 35.05        & 44.70  \\
                    &  P@3 &  33.30   & 35.81   & 33.28     & 24.88    & 31.25         & 39.70 \\
                    &  P@5 &    30.50 & 33.05   & 30.53     & 22.37    & 28.56        & 36.10 \\ \hline 
      AmazonCat-13K &  P@1 & 92.69     & 91.75   & {\bf 93.11}     & NA        & 90.53        & 93.40 \\
                    &  P@3 & 78.43     & 77.97   & 78.20     & NA        & 76.33         & 79.10  \\
                    &  P@5 &   63.56 & 63.68   & 63.41     & NA       & 61.52         & 64.10 \\
      \hline 
    \end{tabular}
    \caption{Precision Scores. Here NA stands for not available.}
    \label{tab:precision}
    
\end{table*}

\section{Conclusion}
For large scale recommendation problem using extreme multilabel classification, a scalable recommender model is essential. We proposed a hybrid parallel implementation of extreme classification using MPI and OpenMP that can scale to arbitrary number of processors. The best part is that this does not involve any loss function or iterative gradient methods. Our preliminary results show that our training time is fastest for some datasets. We derived the communication complexity analysis bounds for both the shared and distributed memory implementations.  With more cores, our parallel analysis suggests that the proposed implementation has the potential to scale further.

\section{Acknowledgement}
This work was done at IIIT, Hyderabad using IIIT seed grant. The author acknowledges all the support by institute. 

\bibliographystyle{splncs04}
\bibliography{ref}

\begin{thebibliography}{10}
\providecommand{\url}[1]{\texttt{#1}}
\providecommand{\urlprefix}{URL }
\providecommand{\doi}[1]{https://doi.org/#1}

\bibitem{OpenMPI}
Open mpi: Open source high performance computing, https://www.open-mpi.org/,
  \url{https://www.open-mpi.org/}

\bibitem{OpenMP}
Openmp, https://www.openmp.org/, \url{https://www.openmp.org/}

\bibitem{kumar2013}
High performance solvers for implicit particle in cell simulation (special
  issue). Procedia Computer Science  \textbf{18},  2251--2258 (2013).
  \doi{https://doi.org/10.1016/j.procs.2013.05.396},
  \url{https://www.sciencedirect.com/science/article/pii/S1877050913005395},
  2013 International Conference on Computational Science

\bibitem{Bhatia2015}
Bhatia, K., Jain, H., Kar, P., Varma, M., Jain, P.: Sparse local embeddings for
  extreme multi-label classification. In: Proceedings of the 28th International
  Conference on Neural Information Processing Systems - Volume 1. p. 730–738.
  NIPS'15, MIT Press, Cambridge, MA, USA (2015)

\bibitem{cilk}
Blumofe, R.D., Joerg, C.F., Kuszmaul, B.C., Leiserson, C.E., Randall, K.H.,
  Zhou, Y.: Cilk: An efficient multithreaded runtime system. SIGPLAN Not.
  \textbf{30}(8),  207–216 (Aug 1995). \doi{10.1145/209937.209958},
  \url{https://doi.org/10.1145/209937.209958}

\bibitem{Jain2016}
Jain, H., Prabhu, Y., Varma, M.: Extreme multi-label loss functions for
  recommendation, tagging, ranking and other missing label applications. In:
  Proceedings of the 22nd ACM SIGKDD International Conference on Knowledge
  Discovery and Data Mining. p. 935–944. KDD '16, Association for Computing
  Machinery, New York, NY, USA (2016). \doi{10.1145/2939672.2939756},
  \url{https://doi.org/10.1145/2939672.2939756}

\bibitem{Jasinska2016}
Jasinska, K., Dembczynski, K., Busa-Fekete, R., Pfannschmidt, K., Klerx, T.,
  H{u}llermeier, E.: Extreme f-measure maximization using sparse probability
  estimates. In: Proceedings of the 33rd International Conference on
  International Conference on Machine Learning - Volume 48. p. 1435–1444.
  ICML'16, JMLR.org (2016)

\bibitem{naram2021a}
Jayadev~Naram, Tanmay~Sinha, P.K.: A riemannian approach for constrained
  optimization problem in extreme classification problems. CoRR
  \textbf{abs/2109.15021} (2021), \url{https://arxiv.org/abs/2109.15021}

\bibitem{Kocev2007}
Kocev, D., Vens, C., Struyf, J., D{\v{z}}eroski, S.: Ensembles of
  multi-objective decision trees. In: Kok, J.N., Koronacki, J., Mantaras,
  R.L.d., Matwin, S., Mladeni{\v{c}}, D., Skowron, A. (eds.) Machine Learning:
  ECML 2007. pp. 624--631. Springer Berlin Heidelberg, Berlin, Heidelberg
  (2007)

\bibitem{kumar2014}
{Kumar}, P.: Communication optimal least squares solver. In: 2014 IEEE Intl
  Conf on High Performance Computing and Communications, 2014 IEEE 6th Intl
  Symp on Cyberspace Safety and Security, 2014 IEEE 11th Intl Conf on Embedded
  Software and Syst (HPCC,CSS,ICESS). pp. 316--319 (2014).
  \doi{10.1109/HPCC.2014.55}

\bibitem{kumar2014b}
{Kumar}, P.: Multithreaded direction preserving preconditioners. In: 2014 IEEE
  13th International Symposium on Parallel and Distributed Computing. pp.
  148--155 (2014). \doi{10.1109/ISPDC.2014.23}

\bibitem{kumar2015}
Kumar, P.: Multilevel communication optimal least squares (special issue).
  Procedia Computer Science  \textbf{51},  1838--1847 (2015).
  \doi{https://doi.org/10.1016/j.procs.2015.05.410},
  \url{https://www.sciencedirect.com/science/article/pii/S1877050915012181},
  international Conference On Computational Science, ICCS 2015

\bibitem{kumar2013b}
Kumar, P., Meerbergen, K., Roose, D.: Multi-threaded nested filtering
  factorization preconditioner. In: Manninen, P., {\"O}ster, P. (eds.) Applied
  Parallel and Scientific Computing. pp. 220--234. Springer Berlin Heidelberg,
  Berlin, Heidelberg (2013)

\bibitem{Prabhu2014}
Prabhu, Y., Varma, M.: Fastxml: A fast, accurate and stable tree-classifier for
  extreme multi-label learning. p. 263–272. KDD '14, Association for
  Computing Machinery, New York, NY, USA (2014). \doi{10.1145/2623330.2623651},
  \url{https://doi.org/10.1145/2623330.2623651}

\bibitem{craftml}
Siblini, W., Meyer, F., Kuntz, P.: Craftml, an efficient clustering-based
  random forest for extreme multi-label learning. In: Dy, J.G., Krause, A.
  (eds.) Proceedings of the 35th International Conference on Machine Learning,
  {ICML} 2018, Stockholmsm{\"{a}}ssan, Stockholm, Sweden, July 10-15, 2018.
  Proceedings of Machine Learning Research, vol.~80, pp. 4671--4680. {PMLR}
  (2018), \url{http://proceedings.mlr.press/v80/siblini18a.html}

\bibitem{Tagami2017}
Tagami, Y.: Annexml: Approximate nearest neighbor search for extreme
  multi-label classification. In: Proceedings of the 23rd ACM SIGKDD
  International Conference on Knowledge Discovery and Data Mining. p.
  455–464. KDD '17, Association for Computing Machinery, New York, NY, USA
  (2017). \doi{10.1145/3097983.3097987},
  \url{https://doi.org/10.1145/3097983.3097987}

\bibitem{Tsoumakas2007}
Tsoumakas, G., Katakis, I.: Multi-label classification: An overview. Int. J.
  Data Warehous. Min.  \textbf{3},  1--13 (2007)

\bibitem{hashing_weinberger}
Weinberger, K.Q., Dasgupta, A., Attenberg, J., Langford, J., Smola, A.J.:
  Feature hashing for large scale multitask learning. CoRR
  \textbf{abs/0902.2206} (2009), \url{http://arxiv.org/abs/0902.2206}

\bibitem{Weston2011}
Weston, J., Bengio, S., Usunier, N.: Wsabie: Scaling up to large vocabulary
  image annotation. p. 2764–2770. IJCAI'11, AAAI Press (2011)

\bibitem{Weston2013}
Weston, J., Makadia, A., Yee, H.: Label partitioning for sublinear ranking. In:
  Proceedings of the 30th International Conference on International Conference
  on Machine Learning - Volume 28. p. II–181–II–189. ICML'13, JMLR.org
  (2013)

\bibitem{Yen2016}
Yen, I.E.H., Huang, X., Zhong, K., Ravikumar, P., Dhillon, I.S.: Pd-sparse: A
  primal and dual sparse approach to extreme multiclass and multilabel
  classification. In: Proceedings of the 33rd International Conference on
  International Conference on Machine Learning - Volume 48. p. 3069–3077.
  ICML'16, JMLR.org (2016)

\bibitem{Yen2017}
Yen, I.E., Huang, X., Dai, W., Ravikumar, P., Dhillon, I., Xing, E.: Ppdsparse:
  A parallel primal-dual sparse method for extreme classification. In:
  Proceedings of the 23rd ACM SIGKDD International Conference on Knowledge
  Discovery and Data Mining. p. 545–553. KDD '17, Association for Computing
  Machinery, New York, NY, USA (2017). \doi{10.1145/3097983.3098083},
  \url{https://doi.org/10.1145/3097983.3098083}

\bibitem{Yu2014}
Yu, H.F., Jain, P., Kar, P., Dhillon, I.S.: Large-scale multi-label learning
  with missing labels. In: Proceedings of the 31st International Conference on
  International Conference on Machine Learning - Volume 32. p.
  I–593–I–601. ICML'14, JMLR.org (2014)

\bibitem{Zhang2014}
{Zhang}, M., {Zhou}, Z.: A review on multi-label learning algorithms. IEEE
  Transactions on Knowledge and Data Engineering  \textbf{26}(8),  1819--1837
  (2014). \doi{10.1109/TKDE.2013.39}

\bibitem{Zhang2007}
Zhang, M.L., Zhou, Z.H.: Ml-knn: A lazy learning approach to multi-label
  learning. Pattern Recognition  \textbf{40}(7),  2038--2048 (2007).
  \doi{https://doi.org/10.1016/j.patcog.2006.12.019},
  \url{https://www.sciencedirect.com/science/article/pii/S0031320307000027}

\end{thebibliography}

\end{document}